\documentclass[journal,twoside,web]{ieeecolor}

\usepackage{amsmath} %
\usepackage{amssymb}  %

\usepackage{amsthm}
\usepackage{xcolor}
\usepackage{mathtools,stmaryrd}
\usepackage{empheq}
\usepackage[hidelinks]{hyperref}
\usepackage{algorithm}
\usepackage{algpseudocodex}
\let\labelindent\undefined
\usepackage{enumitem}
\usepackage{lipsum}
\usepackage{subcaption}
\usepackage{booktabs}
\usepackage{caption}
\usepackage{xspace}
\usepackage{lcsys}

\newcommand{\immrax}{\texttt{immrax}}

\newcommand{\ie}{\emph{i.e.}}

\newcommand{\smallconc}[2]{\begin{bsmallmatrix} #1 \\ #2 \end{bsmallmatrix}}

\newcommand{\co}{\operatorname{co}}

\newcommand{\diag}{\operatorname{diag}}

\newcommand{\supress}[1]{}

\newcommand{\R}{\mathbb{R}}

\newcommand{\IR}{\mathbb{IR}}

\newcommand{\ul}[1]{\underline{#1}}
\newcommand{\ula}{\ul{a}}

\newcommand{\ulA}{\ul{A}}

\newcommand{\ol}[1]{\overline{#1}}
\newcommand{\ola}{\ol{a}}

\newcommand{\olA}{\ol{A}}

\newcommand{\olR}{\ol{R}}

\newcommand{\rmd}{\mathrm{d}}

\theoremstyle{plain}

\newtheorem{lemma}{Lemma}
\newtheorem{theorem}{Theorem}
\newtheorem{corollary}{Corollary}

\theoremstyle{definition}
\newtheorem{definition}{Definition}
\newtheorem{example}{Example}

\theoremstyle{remark}
\newtheorem{remark}{Remark}

\title{
Learning Certified Neural Network Controllers Using Contraction and Interval Analysis
}

\author{%
Akash Harapanahalli$^1$, Samuel Coogan$^1$, and Alexander Davydov$^2$ %
\thanks{*This work was supported in part by the National Science Foundation under awards \#2219755 and \#2440387, and by the Air Force Office of Scientific Research under Grant FA9550-23-1-0303.}%
\thanks{$^{1}$Akash Harapanahalli and Samuel Coogan are with the School of Electrical and Computer Engineering, Georgia Institute of Technology, Atlanta, GA, 30332, USA. \{aharapan,sam.coogan\}@gatech.edu}%
\thanks{$^{2}$Alexander Davydov 
is with the Department of Mechanical Engineering and Ken Kennedy Institute, Rice University, Houston, TX, 77005, USA. davydov@rice.edu
}%
}

\newcommand{\fol}{f_{\mathrm{ol}}}
\newcommand{\fpi}{f_\pi}
\newcommand{\fcl}{f_{\mathrm{cl}}}
\newcommand{\fd}{f_{\mathrm{d}}}
\newcommand{\Dfx}{\frac{\partial \fpi}{\partial x}}

\newcommand{\intX}{\operatorname{int}X}
\newcommand{\Ac}{A^c}
\newcommand{\Ap}{A^\Delta}
\newcommand{\Gc}{G^c}
\newcommand{\Gp}{G^\Delta}

\newcommand{\change}[1]{#1}

\begin{document}

\maketitle
\thispagestyle{empty}
\pagestyle{empty}

\begin{abstract}
We present a novel framework that jointly trains a neural network controller and a neural Riemannian metric with rigorous closed-loop contraction guarantees using formal bound propagation.
Directly bounding the symmetric Riemannian contraction linear matrix inequality causes unnecessary overconservativeness due to poor dependency management.
Instead, we analyze an asymmetric matrix function $G$, where $2^n$ GPU-parallelized corner checks of its interval hull verify that an entire interval subset $X$ is a contraction region in a single shot.
This eliminates the sample complexity problems encountered with previous Lipschitz-based guarantees.
Additionally, for control-affine systems under a Killing field assumption, our method produces an explicit tracking controller capable of exponentially stabilizing any dynamically feasible trajectory using just two forward inferences of the learned policy.
Using JAX and \immrax{} for linear bound propagation, we apply this approach to a full 10-state quadrotor model.
In $<$10 minutes of post-JIT training, we simultaneously learn a control policy $\pi$, a neural contraction metric $\Theta$, and a verified 10-dimensional contraction region $X$.
\end{abstract}
\begin{IEEEkeywords}
Neural networks, contraction theory, formal methods
\end{IEEEkeywords}

\section{Introduction}

\IEEEPARstart{L}{earning}-based methods have demonstrated potential for controlling complex nonlinear systems, owing to their ability to represent and learn highly expressive policies that can be difficult to obtain through classical control design. However, deploying learned controllers in safety-critical applications demands rigorous guarantees on closed-loop behavior that such policies do not inherently provide. 

This gap between expressiveness and certifiability has motivated a growing body of work on formally verifying closed-loop behavior of systems with neural network controllers.
For instance, one can certify neural Lyapunov functions for stabilization or neural control barrier functions for safety, using satisfiability modulo \change{theories (SMT)} or branch-and-bound optimization to rigorously conclude the lack of counterexamples \cite{abate_formal_2021}.
These exact verification approaches, however, can be difficult to scale to large state spaces and deeper networks.
A scalable alternative uses \emph{bound propagation} \cite{liu_algorithms_2021} to verify guaranteed overapproximations of the learned components.

Beyond stabilization and invariance, contraction theory has provided a suite of tools suitable to tackle other challenges like online tracking \cite{manchester_control_2017}, and recent work has developed neural contraction metrics \cite{sun_learning_2021} to help scale these methods.
However, formally verifying these contraction metrics remains a significant challenge, relying on Lipschitz-based sampling arguments or over-conservative bounding techniques \cite{zakwan_neural_2024,davydov_verifying_2025}.
In this work, we use off-the-shelf neural network bound propagators, careful dependency management through the contraction condition, and an efficient exact characterization of the $\ell_2$-logarithmic norm of an interval matrix to jointly train neural network controllers and contraction metrics with formal guarantees.

\subsection{Problem Formulation and Contributions}

Consider a nonlinear control system of the form
\begin{align} \label{eq:fol}
    \dot{x} = \fol(t,x,u),
\end{align}
where $x\in\R^n$ is the state of the system, $u\in\R^m$ is a control input, and $\fol:[0,\infty)\times\R^n\times\R^m\to\R^n$ is a $C^1$ smooth mapping.
Our primary goal is to jointly synthesize an explicit feedback policy $\pi(x)$ and a Riemannian metric $M(x)$ such that the closed-loop system
\begin{align} \label{eq:fpi}
    \dot{x} = \fpi(t,x) := \fol(t,x,\pi(x))
\end{align}
is contracting. 
We also consider the special case of control-affine systems, $\dot{x} = \fd(x) + g(x)u$, where the goal extends to tracking any dynamically feasible trajectory.

To address these problems, we develop a novel framework using bound propagation to jointly train a neural contraction metric and a neural network controller for nonlinear systems with formal guarantees of closed-loop contractivity.
Our specific contributions are outlined:
\newcommand{\contrtitle}[1]{\textbf{#1}.}
\begin{itemize}[left=0pt]
    \item \contrtitle{Asymmetric Contraction Condition} We propose an \change{asymmetric Riemannian contraction condition, equivalent to the standard linear matrix inequality (LMI) provided in \cite{lohmiller_contraction_1998}, which bounds} the $\ell_2$-logarithmic norm of a matrix directly involving the factored parameterization of the metric. 
    While mathematically equivalent on point-wise evaluations, our asymmetric condition can dramatically reduce overestimation error when using bound propagation tools due to better dependency management.
    \item \contrtitle{Certification and Training via Bound Propagation} Using off-the-shelf bound propagators with automatic differentiation, we algorithmically obtain an interval hull of this asymmetric matrix. 
    Applying a result from \cite{rohn_positive_1994}, we show how the maximum logarithmic norm within this interval hull can be checked with $2^n$ GPU parallelized corner checks. We design a loss function in Algorithm \ref{alg:loss} which, if nonpositive, certifies the contraction condition over an entire region $X$ without exhaustive sampling.
\item \contrtitle{Explicit Tracking Control} In the special case where the system is control affine and the metric satisfies a Killing field condition (matching the strong \change{control contraction metric} (CCM) condition from \cite[C2]{manchester_control_2017}), we propose a simple, explicit feedback controller $u(t,x) = \pi(x) - \pi(x'(t)) + u'(t)$, capable of tracking any \change{open-loop} dynamically feasible \change{pair $(x'(t),u'(t))$} using only two forward inferences of the learned policy $\pi$.
    Notably, our method eliminates the need to search for a geodesic of the metric during online execution, and does not require any state augmentation into error coordinates.
\end{itemize}

\change{One of the crucial distinctions between our approach and previous verification frameworks such as~\cite{fitzsimmons_computation_2024}, which certifies neural contraction metrics after training using SMT solvers, is that we embed a differentiable certificate from the efficient bound propagator CROWN \cite{zhang_efficient_2018} directly in the training loss.
This allows us to \emph{learn} controllers and neural contraction metrics that are certified by construction, rather than \emph{post hoc} verification, and then optimize directly against the overestimation error to help mitigate wrapping effects. }

Finally, we \change{illustrate} through a 10-state quadrotor benchmark how the proposed method can certify and train explicit contracting tracking controllers, where sampling-based certification would suffer from the curse of dimensionality.

\subsection{Background, Notation, and Related Work}

\paragraph*{Riemannian Metrics and Contraction Theory}

Contraction theory provides a geometric framework for translating infinitesimal linear properties into incremental exponential stability bounds between trajectories of the system.
There is a rich literature in contraction analysis, including \cite{lohmiller_contraction_1998} for Riemannian spaces, \cite{forni_differential_2013} for a differential Lyapunov function oriented framework, \cite{davydov_non-euclidean_2022} for distances defined by possibly non-Euclidean norms, and \cite{sontag_contractive_2010,manchester_control_2017} for controller design.
In this work, we focus on Riemannian geometry in $\R^n$.

Let $\mu_2(\cdot)$ denote the induced $\ell_2$ logarithmic norm $\mu_2(A) = \lambda_{\max}(\tfrac{A + A^T}{2})$.
For a matrix function $A:\R^n\to\R^{n\times n}$, let $\partial_{v}A(x) = \sum_{i=1}^n \frac{\partial A}{\partial x_i}(x) v_i$ denote the directional derivative.
Let $X\subseteq\R^n$ be a set whose interior $\intX$ is nonempty and connected. 
A \emph{Riemannian metric} $M$ on $X$ is a smooth, symmetric, positive-definite matrix function that is \emph{uniformly bounded}, meaning there exist $a,b>0$ such that $aI \preceq M(x) \preceq bI$ for every $x\in X$.

$M$ defines a length for piecewise-smooth paths $\gamma:[0,1]\to\intX$ by integrating the Riemannian norm of the velocity: $\ell(\gamma) = \int_0^1 \sqrt{\gamma'(s)^T M(\gamma(s)) \gamma'(s)} \rmd s$.
This equips $\intX$ with a distance $d(x,y)$, defined as the infimum of the lengths of all such paths in $\intX$ connecting $x$ and $y$ (we do not assume $\intX$ is geodesically complete, so this infimum may not be achieved by a geodesic within the space).

\begin{definition}[Contraction Region {\cite[Def. 2]{lohmiller_contraction_1998}}] \label{def:contr_region}
A region of the state space $X$ is called a \emph{contraction region} of the vector field $\fpi$ from \eqref{eq:fpi} with respect to \change{uniformly bounded} metric $M(x)$ if there exists $c>0$ such that the linear matrix inequality $\change{S(t,x;c)} \preceq 0$ holds uniformly over $t\in[0,\infty)$, $x\in X$, where 
\begin{align} \label{eq:contraction_LMI}
\begin{aligned}
    \change{S(t,x;c)} := M(x) & \Dfx(t,x) + \Dfx(t,x)^T M(x) \\
    &+ \partial_{\fpi(t,x)} M(x) + 2cM(x).
\end{aligned}
\end{align}
\end{definition}
We note that Definition \ref{def:contr_region} does not require $X$ to be forward invariant.
Definition \ref{def:contr_region} provides an important infinitesimal characterization of the following property: 
if two trajectories \change{$x,x'$ of $\fpi$ \eqref{eq:fpi}, with initial conditions $x_0,x'_0$ respectively at time $t_0$,} are contained within $\intX$ \change{and} have a minimizing geodesic connecting them fully contained within $\intX$, then
\begin{align*}
    d(x(t),x'(t)) \leq e^{-c\change{(t - t_0)}} d(x_0,x'_0).
\end{align*}
This bound is useful in a number of contexts---for instance, for autonomous systems, if a contraction region is forward invariant, geodesically convex and complete, the flow map $\phi_\tau$ for fixed $\tau>0$ is a contraction mapping, and Banach's fixed point theorem concludes that the system exponentially converges to a unique equilibrium.

\begin{table}[]
    \centering
    \begin{tabular}{cl}
        \toprule
        $\fol$ & Open-loop system \eqref{eq:fol} \\
        $\fpi$ & Closed-loop system \eqref{eq:fpi} under policy $u = \pi(x)$ \\
        $\fd$  & Drift term for the control-affine system from \eqref{eq:control_affine} \\
        $g$    & Input vector fields for the control-affine system from \eqref{eq:control_affine} \\
        $\fcl$ & Closed-loop system \eqref{eq:fcl} under policy $u = \pi(x) - \pi(x') + u'$ \\
        $X$ & Region of the state space $\R^n$ with nonempty interior $\intX$ \\
        $M$ & Riemannian metric over $X$ \\
        $\Theta$ & Upper triangular factorization satisfying $M(x) = \Theta(x)^T\Theta(x)$ \\
        $d$ & Metric geodesic distance associated to $M$ \\
        $\pi$ & Feedback policy \\
        $S$ & Left-hand side of the contraction LMI \eqref{eq:contraction_LMI} \\
        $G$ & Asymmetric contraction matrix from Definition \ref{def:asym_contr_mtx} \\
        \bottomrule
    \end{tabular}
    \caption{Symbols and Notation}
    \label{tab:placeholder}
    \vspace{-2.1em}
\end{table}

\paragraph*{Interval Analysis and Bound Propagation} \label{sec:bound_prop}

Let $\IR$ denote the set of closed intervals of $\R$, of the form $[\ula,\ola] := \{a\in\R : \ula \leq a \leq \ola\}$ for $\ula,\ola\in\R$. 
An interval matrix $[A] = [\ulA,\olA] \in\IR^{m\times n}$ is a matrix where $[A]_{ij} = [\ulA_{ij},\olA_{ij}]\in\IR$. $[A]$ is a subset of $\R^{m\times n}$ in the entrywise sense, so $A\in[A]$ if $A_{ij}\in[A]_{ij}$ for every $i=1,\dots,m$ and $j=1,\dots,n$.
For the interval matrix $[A]$, we notate its center $\Ac\in\R^{m\times n}$ as $\Ac = \big(\ulA + \olA\big)/{2}$ and its radius $\Ap\in\R^{m\times n}$ as $\Ap = \big(\olA - \ulA\big)/{2}$.
\change{For a vector $v\in\R^n$, let $\diag(v)\in\R^{n\times n}$ be the matrix $\diag(v)_{ii} = v_i$ and $\diag(v)_{ij} = 0$ for $i\neq j$.}

Given a matrix function $A:\R^n\to\R^{m\times n}$, an \emph{interval hull} of $A$ over an input set $X\subseteq\R^n$ is an interval matrix $[A]\in\R^{m\times n}$ such that $A(x) \in [A]$ uniformly for every $x\in X$.
Obtaining the smallest interval hull of $A$ is generally an intractable nonconvex optimization problem.
Instead, a general way to obtain larger valid interval hulls is through \emph{bound propagation}, where (i) a computation tree for $A$ is built through an abstract evaluation, and (ii) the input bound $X$ is successively propagated through the computation tree.
These existing approaches have varying tradeoffs of accuracy and runtime: 
interval bound propagation \cite{jaulin_applied_2001,gowal_effectiveness_2018}, which efficiently propagates lower and upper scalar bounds; 
linear bound propagation \cite{zhang_efficient_2018}, which propagates linear lower and upper bounding functions;
set-based methods \cite{althoff_set_2021} including zonotopes and star sets \cite{tran_star-based_2019}, which propagate set overapproximations forward through the network;
Taylor models \cite{chen_reachability_2015} and other polynomial approximators, which propagate tighter but more computationally expensive representations.

A fundamental issue often encountered by these approaches is the \emph{dependency problem}.
When a variable appears multiple times in a computation graph, these methods treat each instance as an independent variable, failing to capture algebraic interactions between terms. 
We demonstrate this phenomenon in Example \ref{ex:dependency_management} in the next Section, and show how our asymmetric formulation mitigates these issues.
\section{Riemannian Contraction via Bound Propagation}

In prior contraction synthesis works \cite{sun_learning_2021}, the LMI \eqref{eq:contraction_LMI} is incorporated into the training by sampling points $x_n$ across the desired contraction region $X$ and backpropagating against $\lambda_{\max}(\change{S(t,x_n;c)})$.
In this section, we use bound propagation to directly verify the contraction condition over the region $X$ without sampling, which we use to design an explicit tracking control policy for control-affine systems.

\subsection{The Asymmetric Contraction Matrix} \label{sec:asym_contr_mtx}

We first propose an asymmetric matrix that reduces the overestimation error from bound propagation.

\begin{definition} [Asymmetric Contraction Matrix] \label{def:asym_contr_mtx}
Consider the vector field $\fpi$ from \eqref{eq:fpi}, a Riemannian metric \change{$M(x) = \Theta(x)^T \Theta(x) + aI$} for $a\geq 0$, and a fixed $c>0$.
Define the \emph{asymmetric contraction matrix} as the following:
\begin{align} \label{eq:asym_contr_mtx}
\begin{aligned}
    &\change{G(t,x;a,c) := a\left(\Dfx(t,x) + cI\right)} \\
    & +\Theta(x)^T \left[\partial_{\fpi(t,x)} \Theta(x) + \Theta(x) \left(\Dfx(t,x) + cI\right)\right] \\
\end{aligned}
\end{align}
\end{definition}

The following lemma demonstrates how an $\ell_2$-logarithmic norm bound on $G$ is equivalent to the LMI \eqref{eq:contraction_LMI}, building off the fact that the symmetrization of $G$ exactly equals $\tfrac12 S$.

\begin{lemma}[Equivalence] \label{lem:uniform_equivalence}
Consider the vector field $\fpi$ from \eqref{eq:fpi}, let \change{$M(x) = \Theta(x)^T\Theta(x) +aI$ for $a\geq 0$} be a \change{uniformly bounded} Riemannian metric, and $c>0$ be fixed.
The following are equivalent:
\begin{enumerate}[label=(\roman*)]
    \item $\change{S(t,x;c)} \preceq 0$, with $S$ defined in \eqref{eq:contraction_LMI};
    \item $\mu_2(\change{G(t,x;a,c)}) \leq 0$, with $G$ defined as the asymmetric contraction matrix \eqref{eq:asym_contr_mtx}.
\end{enumerate}
Thus, $X$ is a contraction region if and only if $\mu_2(\change{G(t,x;a,c)}) \leq 0$ uniformly over $t\in[0,\infty)$, $x\in X$.
\end{lemma}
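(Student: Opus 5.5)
The plan is to show the pointwise identity $G(t,x;a,c) + G(t,x;a,c)^T = S(t,x;c)$. Once this holds, the equivalence follows at each fixed $(t,x)$, since $\mu_2(G) = \lambda_{\max}\big(\tfrac{G+G^T}{2}\big) = \tfrac12\lambda_{\max}(S)$. Hence $\mu_2(G)\le 0$ if and only if $\lambda_{\max}(S)\le 0$, and this holds if and only if $S\preceq 0$ because $S$ is symmetric. The final claim about contraction regions then follows from Definition \ref{def:contr_region}, applying the pointwise equivalence uniformly over $t\in[0,\infty)$ and $x\in X$ for the fixed $c$.

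For the identity, I would write $J = \Dfx(t,x)$ and $v = \fpi(t,x)$, and expand $S$ with $M = \Theta^T\Theta + aI$. The only step needing care is the directional derivative of $M$. Since $aI$ is constant and the product rule applies to each partial derivative $\partial/\partial x_i$ (and $\partial_v$ is linear in those partials), we get
\begin{align*}
\partial_v M = (\partial_v\Theta)^T\Theta + \Theta^T(\partial_v\Theta).
\end{align*}
Here we use that transposition commutes with differentiation, so $\partial_v(\Theta^T) = (\partial_v\Theta)^T$.

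Next I would collect the terms of $S$. The $a$-part is $a(J + J^T + 2cI) = a(J+cI) + a(J+cI)^T$. The $\Theta$-part is
\begin{align*}
\Theta^T\Theta J + J^T\Theta^T\Theta + 2c\,\Theta^T\Theta + (\partial_v\Theta)^T\Theta + \Theta^T\partial_v\Theta.
\end{align*}
This equals $H + H^T$ with $H = \Theta^T\big[\partial_v\Theta + \Theta(J+cI)\big]$: the term $\Theta^T\Theta(J+cI)$ plus its transpose produces the $J$ terms and $2c\,\Theta^T\Theta$, and $\Theta^T\partial_v\Theta$ plus its transpose produces the derivative terms. Since $G = a(J+cI) + H$, this gives $S = G + G^T$.

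I do not expect a real obstacle here; the argument is bookkeeping. The points to check are that the $2cM$ term splits evenly between $G$ and $G^T$, and that the upper-triangular structure of $\Theta$ plays no role in the algebra. Uniform boundedness of $M$ is needed only so that Definition \ref{def:contr_region} applies, not for the equivalence of (i) and (ii).
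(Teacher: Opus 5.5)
Your proposal is correct and matches the paper's proof: you establish the pointwise identity $S(t,x;c) = G(t,x;a,c) + G(t,x;a,c)^T$ using the product rule $\partial_{\fpi(t,x)} M = (\partial_{\fpi(t,x)}\Theta)^T\Theta + \Theta^T\partial_{\fpi(t,x)}\Theta$, then read off $\mu_2(G) = \tfrac12\lambda_{\max}(S)$. Your extra remarks (symmetry of $S$, the even split of $2cM$, and that uniform boundedness matters only for invoking Definition~\ref{def:contr_region}) are accurate refinements of the same argument.
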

\begin{proof}
Expanding $G$ and adding its transpose reveals
\begin{align*}
    &\change{G(t,x;a,c)} + \change{G(t,x;a,c)}^T  = 2c\Theta(x)^T \Theta(x)  \\
    &+ \textstyle \Theta(x)^T \partial_{\fpi(t,x)} \Theta(x) + \left(\partial_{\fpi(t,x)} \Theta(x)\right)^T \Theta(x) \\
    &+ \textstyle \Theta(x)^T \Theta(x) \Dfx(t,x) + \Dfx(t,x)^T \Theta(x)^T\Theta(x) \\
    & \textstyle \change{+ a\Dfx(t,x) + a\Dfx(t,x)^T +  2acI}
\end{align*}
Substituting $M(x) = \Theta(x)^T \Theta(x) \change{+ aI}$ and $\partial_{\fpi(x)}M(x) = \left(\partial_{\fpi(x)}\Theta(x)\right)^T \Theta(x) + \Theta(x)^T \partial_{\fpi(x)}\Theta(x)$ using the matrix product rule shows that $\change{S(t,x;c)} = \change{G(t,x;a,c)} + \change{G(t,x;a,c)}^T$. 
Thus, $\mu_2(\change{G(t,x;a,c)}) = \lambda_{\max}\left(\frac{\change{S(t,x;c)}}{2}\right)\leq 0$ if and only if $\change{S(t,x;c)}$ has nonpositive eigenvalues.
\end{proof}

We note that the condition $\mu_2(\change{G(t,x;a,c)}) \leq 0$ is different from the condition $\mu_2(F(t,x)) \leq -c$ provided in \cite[Def. 2]{lohmiller_contraction_1998}, which bounds the logarithmic norm of the \emph{generalized Jacobian} $F(t,x) = (\partial_{\fpi(t,x)} \Theta(x) + \Theta(x) \Dfx(t,x))\Theta(x)^{-1}$ \change{(for the $a=0$ case)}. 
Bounding $F$ would require propagating bounds through a matrix inversion, which is difficult.
Instead, the asymmetric contraction matrix \eqref{eq:asym_contr_mtx} requires only a transpose of $\Theta$.

The following analytical example demonstrates how while the two conditions are theoretically equivalent, applying bound propagation strategies to $\change{G(t,x;a,c)}$ instead of $\change{S(t,x;c)}$ can dramatically reduce overestimation error due to lost information in the symmetrization.

\begin{example}[Dependency management] \label{ex:dependency_management}
In the scalar case, we can easily compare the expressions \eqref{eq:contraction_LMI} and \eqref{eq:asym_contr_mtx},
\begin{align*}
    S &= \Theta^2 J + J\Theta^2 + \Theta\dot{\Theta} + \dot{\Theta}\Theta + 2c\Theta^2, \\
    G &= \Theta(\dot{\Theta} + \Theta (J + c)),
\end{align*}
\change{with the notation $\dot{\Theta} = \partial_{\fcl}\Theta$, and $a=0$.}
\change{Since $S = 2G$, this may seem like a simple notational difference}. 
However, recall from Section \ref{sec:bound_prop} that bound propagators like interval analysis and CROWN directly trace the above expressions into computation graphs---\change{the resulting computations and uncertainty propagation can yield significant differences between $S$ and $G$.}
The expression for $G$ captures interactions between $\dot{\Theta}$ and $\Theta J$ before premultiplying by $\Theta$, while the expression for $S$ multiplies \change{first, introducing redundancies}.

For instance, evaluation of the above expressions over $\Theta\in[0.5,1]$, $\dot{\Theta}\in[-2,-1.5]$, $J\in[-1,1]$ and $c = 0.5$ gives
\begin{align*}
    S \in [-5.75,1.5] \quad \text{ and } \quad 2G\in[-5,0],
\end{align*}
which is a significant difference when trying to verify nonpositivity (contraction) using Lemma \ref{lem:uniform_equivalence}.
\end{example}

\subsection{Efficient Parallel Corner Checks}

Given an interval matrix $[A]$, one way to solve $\max_{A\in[A]}\mu_2(A)$ is to (i) write $[A] = \co\{A_k\}_{k=1}^{2^{n^2}}$ ($\co$ is the convex hull), where the $A_k$ consist of all $2^{n^2}$ matrices of the form $(A_k)_{ij} \in \{\ulA_{ij}, \olA_{ij}\}$, and (ii) note that $\mu_2$ is convex, so $\max_{A\in[A]}\mu_2(A) = \max_{k \in \{1,\dots,2^{n^2}\}}\mu_2(A_k)$. 
While this exactly computes the largest logarithmic norm of the set, it scales very poorly in $n$.
Instead, we apply a result from \cite{rohn_positive_1994} to reduce the number of corner checks down to $2^n$.
For the $n=10$ quadrotor in Section \ref{sec:example}, for instance, $2^{10^2}$ is over $10^{30}$ corners, while $2^{10}$ is far more practical at $1024$ corners.

\begin{lemma}[{\cite[Thm 1]{rohn_positive_1994}}] \label{lem:corner_check}
For an interval matrix $[A]\in\IR^{n\times n}$,
\begin{align*}
    &\max_{A\in[A]} \mu_2(A) \\
    &= \max \left\{ \mu_2 (\Ac + \diag(s)\Ap \diag(s)) : s\in\{-1,+1\}^n \right\}.
\end{align*}
\end{lemma}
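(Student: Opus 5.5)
We prove the two inequalities separately. Since $\mu_2(A) = \lambda_{\max}(\tfrac{A+A^T}{2}) = \max_{\|x\|_2=1} x^T A x$, the problem becomes one about the quadratic form $x^T A x$. For the "$\geq$" direction nothing is needed. For each $s\in\{-1,+1\}^n$, the matrix $\Ac + \diag(s)\Ap\diag(s)$ has $(i,j)$ entry $\Ac_{ij} + s_is_j\Ap_{ij}$. Because $s_is_j=\pm1$, this entry is either $\ulA_{ij}$ or $\olA_{ij}$. Hence each such matrix lies in $[A]$, and the right-hand side is at most the left-hand side.

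For "$\leq$", fix any $A\in[A]$ and a unit vector $x$ with $x^TAx = \mu_2(A)$. Write $A = \Ac + D$ with $|D_{ij}|\le \Ap_{ij}$. Then
\begin{align*}
x^T A x = x^T\Ac x + \sum_{i,j} D_{ij}x_ix_j \le x^T \Ac x + \sum_{i,j}\Ap_{ij}|x_i||x_j|.
\end{align*}
Now choose $s$ with $s_i = \operatorname{sign}(x_i)$, taking $+1$ when $x_i=0$. Then $|x_i| = s_ix_i$, so
\begin{align*}
\sum_{i,j}\Ap_{ij}|x_i||x_j| = x^T\diag(s)\Ap\diag(s)x .
\end{align*}
Consequently
\begin{align*}
\mu_2(A) \le x^T(\Ac + \diag(s)\Ap\diag(s))x \le \mu_2(\Ac+\diag(s)\Ap\diag(s)),
\end{align*}
where the last inequality uses the variational characterization again, now for the corner matrix. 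Maximizing over $A\in[A]$ gives the claim. Both maxima are attained: the left by compactness of $[A]$ and continuity of $\mu_2$, the right because the set of $s$ is finite.

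The only delicate step is the middle bound. It requires replacing $D_{ij}x_ix_j$ by $\Ap_{ij}|x_i||x_j|$ using $\Ap_{ij}\ge 0$, and then noticing that this absolute-value quadratic form is realized exactly by a sign pattern $s$ that depends only on $x$. Because the same $s$ works for every entry simultaneously, only $2^n$ corners are needed rather than $2^{n^2}$. Everything else is the Rayleigh-quotient characterization of $\lambda_{\max}$ applied to the symmetric part. No symmetry of $\Ap$ is needed, since $x^TBx = x^T\tfrac{B+B^T}{2}x$ for any $B$.
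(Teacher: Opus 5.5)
Your proof is correct and follows essentially the same route as the paper. You split $A = \Ac + (A-\Ac)$, bound the perturbation term of the quadratic form by $|x|^T \Ap |x|$, realize that bound exactly through the sign pattern of the maximizing vector, and obtain equality because every corner matrix $\Ac + \diag(s)\Ap\diag(s)$ lies in $[A]$. Your choice $s_i = +1$ when $x_i = 0$ and your explicit check that each corner entry equals $\ulA_{ij}$ or $\olA_{ij}$ tidy up two small points the paper leaves implicit.
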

The proof is in \cite{rohn_positive_1994} in the context of verifying the positive definiteness of an interval matrix---we make slight modifications to bound the function $\mu_2(A) = \lambda_{\max}(\frac{A + A^T}{2})$ instead.
\change{The essential idea is to decompose $v^T Av$ into $v^T \Ac v$ and bound the difference using $|v|^T \Ap |v|$,  where $|\cdot|$ is the entrywise absolute value.
Then, we consider possible choices of $|v|$ instead of possible choices of $A - \Ac$.}
\begin{proof}
We first recall $\mu_2(A) = \max_{v : |v|_2 = 1} (v^T \tfrac{A + A^T}{2} v) = \max_{v : |v|_2 = 1} v^T A v$.
Let $A\in[A]$. We can write $v^T A v = v^T \Ac v + v^T (A - \Ac) v$. 
\change{Note that for any row $a^T$ and vector $v$, $|a^T v| \leq |a^T||v|$.}
Thus, $|v^T(A - \Ac) v|\leq \change{|v^T(A - \Ac)||v| } \leq |v|^T |A - \Ac| |v| \leq |v|^T A_{\Delta} |v|$, and therefore $v^T Av \leq v^T \Ac v + |v|^T \Ap |v|$.
Define a sign vector $s(v)\in\{-1,+1\}^n$ by $(s(v))_j = \operatorname{sign}(v_j)$---then $|v| = \diag(s(v)) v$ and
\begin{align*}
    &v^T Av \leq v^T \Ac v + v^T \diag(s(v)) \Ap \diag(s(v)) v  \\
    &= v^T (\Ac + \diag(s(v)) \Ap \diag(s(v))) v \\
    &\leq \max\{v^T (\Ac + \diag(s) \Ap \diag(s)) v : s\in\{-1,+1\}^n\}.
\end{align*}
Taking a $\max$ over $v$ with $|v|_2 = 1$ on either side and swapping the order of $\max$, 
$\mu_2(A) \leq \mu_2(\Ac + \diag(s) \Ap \diag(s))$ for some sign vector $s$. 
The containment $\Ac + \diag(s) \Ap \diag(s)\in[A]$ implies equality.
\end{proof}

\begin{remark}
In prior work \cite{davydov_verifying_2025}, a sufficient condition to check the negative (semi-)definiteness of an interval matrix $[A]$ constructs a matrix $B$ bounding the Metzler majorant $\lceil A\rceil_{\text{Mzr}}$\footnote{$\big(\lceil A\rceil_{\text{Mzr}}\big)_{ij} = |A_{ij}|$ if $i \neq j$ and $\big(\lceil A\rceil_{\text{Mzr}}\big)_{ii} = A_{ii}$.} of any matrix $A\in[A]$, and verifies that this matrix has all eigenvalues with negative real part. However, this sufficient condition is inherently conservative, even for single matrices. For example, $A = -tvv^T$ with $t > 0$ and $v = [1,\dots,1]^T \in \R^n$ is negative semidefinite yet its Metzler majorant has maximum eigenvalue equal to $t(n-2)$ which is positive for all $n > 2$ and grows unbounded as $t \to \infty$.
\end{remark}

Applying Lemma \ref{lem:corner_check} with an interval hull of the asymmetric contraction matrix over a region $X$, we can verify that $X$ is a contraction region through $2^n$ corner checks.

\begin{theorem} \label{thm:main_thm}
Let $[G]\in\IR^{n\times n}$ be an interval hull of the asymmetric contraction matrix $G$ from \eqref{eq:asym_contr_mtx} over $[0,\infty)\times X$, for fixed $c>0$ and region $X\subseteq\R^n$.
If 
\begin{align*}
    \max \left\{ \mu_2 (\Gc + \diag(s)\Gp \diag(s)) : s\in\{-1,+1\}^n \right\} \leq 0,
\end{align*}
then $X$ is a contraction region for the nonlinear system $\fpi$ from \eqref{eq:fpi} at rate $c$.
\end{theorem}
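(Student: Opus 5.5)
The proof chains Lemma~\ref{lem:corner_check} with Lemma~\ref{lem:uniform_equivalence}, passing through the interval hull. Fix any $t\in[0,\infty)$ and $x\in X$. By the definition of an interval hull, $G(t,x;a,c)\in[G]$, so trivially
\begin{align*}
\mu_2(G(t,x;a,c)) \leq \max_{A\in[G]} \mu_2(A).
\end{align*}
Applying Lemma~\ref{lem:corner_check} to the interval matrix $[G]$, with center $\Gc$ and radius $\Gp$, replaces the right-hand side by the maximum over the $2^n$ sign-vector corners $\Gc + \diag(s)\Gp\diag(s)$. By hypothesis this maximum is nonpositive. Hence $\mu_2(G(t,x;a,c))\leq 0$ for every $(t,x)\in[0,\infty)\times X$.

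Next, invoke Lemma~\ref{lem:uniform_equivalence}, whose equivalence (i)$\Leftrightarrow$(ii) holds pointwise. It converts $\mu_2(G(t,x;a,c))\leq 0$ into $S(t,x;c)\preceq 0$ at each $(t,x)$. Since the bound holds uniformly over $t\in[0,\infty)$ and $x\in X$, with the same fixed $c>0$, Definition~\ref{def:contr_region} is satisfied. So $X$ is a contraction region at rate $c$ with respect to $M=\Theta^T\Theta+aI$.

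The only point needing care is a standing hypothesis rather than a calculation. Definition~\ref{def:contr_region} and Lemma~\ref{lem:uniform_equivalence} both require $M$ to be a uniformly bounded Riemannian metric on $X$, and the theorem inherits this implicitly from Definition~\ref{def:asym_contr_mtx}. I would state explicitly that $M(x)=\Theta(x)^T\Theta(x)+aI$ is assumed to satisfy this. If $a>0$, the lower bound $aI\preceq M$ is automatic. The upper bound follows if $\Theta$ is bounded on $X$, for instance because $\Theta$ is continuous and $X$ is a compact box. With that hypothesis in place, the argument is just the two lemmas composed, and I expect no real obstacle.
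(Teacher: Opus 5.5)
Your proof is correct and is essentially the paper's argument: hull containment plus Lemma~\ref{lem:corner_check} gives $\mu_2(G(t,x;a,c))\leq 0$ for all $(t,x)\in[0,\infty)\times X$, and Lemma~\ref{lem:uniform_equivalence} turns this pointwise into $S(t,x;c)\preceq 0$. You are also right that uniform boundedness of $M$ is an implicit standing hypothesis; the paper does not state it in the theorem and instead enforces it through the $\hat{b}$ check in Corollary~\ref{cor:loss}.
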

\begin{proof}
Lemma \ref{lem:corner_check} implies $\max_{x\in X} \mu_2(\change{G(t,x;a,c)}) \leq 0$, and Lemma \ref{lem:uniform_equivalence} completes the proof.
\end{proof}

\subsection{Loss for Certified Metric and Policy Training}

\begin{algorithm}[t]
\begin{algorithmic}[1]
\State \textbf{Input}: $a,b>0$ metric bounds, \change{$c>0$} contraction rate
\Function{loss}{$\Theta,\pi;X,a,b,c$} 
    \State $[G] \gets$ interval hull of $G$ from Definition \ref{def:asym_contr_mtx} over $X$ \label{alg:loss:Ghull} 
    \State $\lambda \gets \underset{s\in\{-1,+1\}^n}{\max}\mu_2(\Gc + \diag(s) \Gp \diag(s))$
    \State $[M] \gets$ interval hull of \change{$\Theta^T \Theta + aI$} over $X$ \label{alg:loss:Mhull} 
    \State $\hat{b} \gets \underset{s\in\{-1,+1\}^n}{\max}\mu_2(M^c + \diag(s) M_\Delta \diag(s))$
    \State \Return $\max\{\lambda,0\} + \max\{\hat{b} - b,0\}$
\EndFunction
\end{algorithmic}
\caption{Certified Neural Contraction Metric Loss}
\label{alg:loss}
\end{algorithm}

\begin{corollary} \label{cor:loss}
    If $\textsc{loss}(\Theta,\pi;X,a,b,c) \leq 0$ in Algorithm \ref{alg:loss}, for $a,b,c>0$, then
    \begin{enumerate}[label=(\roman*)]
        \item $M(x) = \Theta(x)^T\Theta(x)\change{+aI}$ satisfies $aI\preceq M(x) \preceq bI$;
        \item $X$ is a contraction region for the closed-loop system $\dot{x} = \fpi(x) = \fol(x,\pi(x))$ from \eqref{eq:fpi} at rate $c$.
    \end{enumerate}
\end{corollary}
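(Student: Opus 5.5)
The loss in Algorithm \ref{alg:loss} is a sum of two terms of the form $\max\{\cdot,0\}$, so each term is nonnegative. Hence $\textsc{loss}(\Theta,\pi;X,a,b,c)\leq 0$ forces both terms to vanish, which means $\lambda\leq 0$ and $\hat b\leq b$. The plan is to turn each of these two scalar inequalities into one of the two claimed conclusions, using Lemma \ref{lem:corner_check} for (i) and Theorem \ref{thm:main_thm} for (ii).

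\textbf{Part (i).} The lower bound is immediate. Since $\Theta(x)^T\Theta(x)\succeq 0$, we have $M(x)=\Theta(x)^T\Theta(x)+aI\succeq aI$ for every $x$, and $a>0$ makes $M$ positive definite. For the upper bound, note that $M(x)$ is symmetric, so
\begin{align*}
\lambda_{\max}(M(x))=\mu_2(M(x)).
\end{align*}
By line \ref{alg:loss:Mhull}, $[M]$ is an interval hull of $M$ over $X$, so $M(x)\in[M]$ for every $x\in X$. Lemma \ref{lem:corner_check} then gives
\begin{align*}
\mu_2(M(x))\leq \max_{A\in[M]}\mu_2(A)=\hat b\leq b.
\end{align*}
Therefore $M(x)\preceq bI$ on $X$. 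Together, these show that $M$ is a uniformly bounded Riemannian metric, with the smoothness inherited from that of $\Theta$.

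\textbf{Part (ii).} The condition $\lambda\leq 0$ is exactly the corner-check hypothesis of Theorem \ref{thm:main_thm} applied to the hull $[G]$ from line \ref{alg:loss:Ghull}. The theorem, which combines Lemma \ref{lem:corner_check} with Lemma \ref{lem:uniform_equivalence}, then yields that $X$ is a contraction region at rate $c$.

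The main point to check is whether the hypotheses of Lemma \ref{lem:uniform_equivalence} are actually met. That lemma requires $M=\Theta^T\Theta+aI$ to be a uniformly bounded Riemannian metric, and this is precisely what part (i) supplies. So part (i) must be established first, and it is what makes the loss a valid certificate of the contraction region in the sense of Definition \ref{def:contr_region}.

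A second point concerns time dependence. The system in the corollary is written as time-invariant, $\fpi(x)$. So the hull $[G]$ computed over $X$ alone is already a hull over $[0,\infty)\times X$, as Theorem \ref{thm:main_thm} requires, and the conclusion holds uniformly in $t$.
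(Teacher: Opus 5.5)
Your proposal is correct and follows the paper's proof essentially step for step: the two $\max\{\cdot,0\}$ terms are nonnegative, so you get $\lambda\leq 0$ and $\hat b\leq b$; the lower bound comes from $\Theta^T\Theta\succeq 0$, the upper bound from Lemma~\ref{lem:corner_check} applied to the hull $[M]$, and part (ii) from Theorem~\ref{thm:main_thm}. Your extra remarks are accurate refinements that the paper leaves implicit: part (i) supplies the uniform boundedness that Lemma~\ref{lem:uniform_equivalence} requires, and time-invariance makes a hull over $X$ a hull over $[0,\infty)\times X$.
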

\begin{proof}
    \change{$aI \preceq M(x)$ as $\Theta^T\Theta\succeq0$}.
    $\textsc{loss}(\Theta,\pi;X,a,b,c) \leq 0$ implies $\lambda \leq 0$ and $\hat{b} - b \leq 0$.
    By Lemma~\ref{lem:corner_check}, for every $M\in[M]$, 
    $\lambda_{\max}((M + M^T)/2) \leq \hat{b}$, implying part (i). 
    Part (ii) follows by application of Theorem \ref{thm:main_thm}.
\end{proof}

A crucial step in Algorithm \ref{alg:loss} is obtaining the interval hulls $[G]$ and $[M]$ containing the output of~\eqref{eq:asym_contr_mtx} over an input set $X$. 
In practice, we found that linear bound propagation in the style of CROWN \cite{zhang_efficient_2018,xu_automatic_2020} provides the right balance of accuracy, scalability, and trainability.
We use the implementation provided in our toolbox \immrax{} \cite{harapanahalli_immrax_2024} \footnote{\url{https://github.com/gtfactslab/immrax}} in JAX \cite{bradbury_jax_2018}. 
JAX allows us to automatically (i) construct the asymmetric contraction matrix \eqref{eq:asym_contr_mtx} using automatic differentiation, (ii) trace the computation graph applying linear bound propagation, and (iii) backpropagate the loss from Algorithm \ref{alg:loss} and dispatch the neural network training to the GPU.

\section{Explicit Tracking Control Design} \label{sec:tracking}

Consider the following control-affine nonlinear system,
\begin{align} \label{eq:control_affine}
    \dot{x} = \fol(x,u) = \fd(x) + g(x) u,
\end{align}
where $\fd:\R^n\to\R^n$ is a $C^1$ drift term and $g(x)\in\R^{n\times m}$ is a $C^1$ state dependent matrix.
With some restrictions on the choice of metric $M(x)$, we can apply the methodology from the previous section to design a simple feedback controller to exponentially track any dynamically feasible trajectory within the contraction region $X$.
\change{We assume this dynamically feasible trajectory is precomputed using a motion planning algorithm.}
Notably, we do not require any online geodesic computations like \cite{manchester_control_2017,singh_robust_2023}, or state augmentation with the trajectory to be tracked like previous learning methods \cite{sun_learning_2021}.

\begin{corollary} \label{cor:tracking}
Suppose $\textsc{loss}(\Theta,\pi;X,a,b,c) \leq 0$ for the nonlinear system $\dot{x} = \fd(x) + g(x)u$ with hyperparameters $a,b,c>0$, satisfying the following condition: for every $j=1,\dots,m$ and every $x\in X$,
\begin{align} \label{eq:killing_field}
    \change{\Theta(x)^T \partial_{g_j(x)} \Theta(x) + (\Theta(x)^T\Theta(x) + aI)\frac{\partial g_j}{\partial x} (x) = 0.}
\end{align}
Let $x'(t)$ be a trajectory associated with a nominal input curve $u'(t)$.
If there exists an $R>0$ such that the metric tube of radius $R$ around $x'(t)$ is fully contained inside $\intX$, \ie, for every $t\geq 0$, $\{x : d(x,x'(t)) \leq R\} \subseteq \intX$,
then any trajectory $x(t)$ under the feedback policy
\begin{align} \label{eq:tracking_control}
    u(t,x) = \pi(x) - \pi(x'(t)) + u'(t) 
\end{align}
with initial condition $d(x(0),x'(0)) \leq R$ satisfies, for $t\geq 0$,
\begin{align*}
    d(x(t),x'(t)) \leq e^{-ct} d(x(0),x'(0)).
\end{align*}
\end{corollary}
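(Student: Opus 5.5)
The plan is to view the closed-loop tracking system as a time-varying vector field
\begin{align*}
\fcl(t,x) := \fd(x) + g(x)\big(\pi(x) - \pi(x'(t)) + u'(t)\big) = \fpi(x) + g(x)w(t),
\end{align*}
with $w(t) := u'(t) - \pi(x'(t))$. Note that $x'(t)$ is itself a trajectory of $\fcl$: substituting $x=x'(t)$ gives $\fcl(t,x'(t)) = \fd(x') + g(x')u'(t) = \dot x'(t)$. So the claim reduces to showing that $X$ is a contraction region for $\fcl$ at rate $c$ with respect to $M = \Theta^T\Theta + aI$. We then apply the incremental bound stated after Definition~\ref{def:contr_region} to the pair of trajectories $x(t)$ and $x'(t)$ of the same time-varying field $\fcl$.

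For the contraction step, I will compute $S_{\fcl}(t,x;c)$ from \eqref{eq:contraction_LMI}. Since $\fcl = \fpi + \sum_j g_j w_j(t)$, we have $\partial\fcl/\partial x = \Dfx + \sum_j w_j \,\partial g_j/\partial x$ and $\partial_{\fcl} M = \partial_{\fpi}M + \sum_j w_j \partial_{g_j}M$. Hence
\begin{align*}
S_{\fcl}(t,x;c) = S_{\fpi}(x;c) + \sum_j w_j(t)\Big(M\tfrac{\partial g_j}{\partial x} + \tfrac{\partial g_j}{\partial x}^T M + \partial_{g_j}M\Big).
\end{align*}
Writing $\partial_{g_j}M = \Theta^T\partial_{g_j}\Theta + (\partial_{g_j}\Theta)^T\Theta$, each bracket is $K_j + K_j^T$, where $K_j$ is the left-hand side of \eqref{eq:killing_field}. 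It therefore vanishes on $X$ by hypothesis. So $S_{\fcl} = S_{\fpi} \preceq 0$ on $X$, uniformly in $t$. This uses Corollary~\ref{cor:loss}(ii), which gives $S_{\fpi}\preceq0$ via Lemma~\ref{lem:uniform_equivalence}, and Corollary~\ref{cor:loss}(i), which gives the uniform bounds $aI\preceq M\preceq bI$.

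The remaining and main obstacle is that the incremental bound after Definition~\ref{def:contr_region} requires both trajectories, and a near-minimizing path joining them, to stay in $\intX$. The tube assumption handles this. I will argue without assuming minimizing geodesics exist. Fix $t_0$ with $x(t_0)$ in the closed tube, take a path $\gamma$ from $x'(t_0)$ to $x(t_0)$ of length less than $d+\epsilon$, and flow it by $\fcl$. Along the flowed path, the standard computation $\frac{\rmd}{\rmd t}(\delta^T M\delta) = \delta^T S_{\fcl}\delta - 2c\,\delta^T M\delta \le -2c\,\delta^T M\delta$ shows that the length $\ell(\gamma_t)$ decays like $e^{-c(t-t_0)}$ while the path remains in $X$.

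A continuity/bootstrap argument then keeps the path in the tube. Let $T$ be the first time some point of $\gamma_t$ leaves $\{d(\cdot,x'(t))\le R\}$. Before $T$, every point $\gamma_t(s)$ satisfies $d(\gamma_t(s),x'(t)) \le \ell(\gamma_t) < (d(x(0),x'(0))+\epsilon) e^{-ct} \le R + \epsilon$. I will handle this carefully by working with initial distance $<R$ plus a limiting argument, or with a slightly shrunk radius, so that the strict inequality contradicts leaving the tube. Since the tube lies in $\intX$, the curve stays in $\intX$ for all time. Letting $\epsilon\to0$ yields $d(x(t),x'(t))\le e^{-ct}d(x(0),x'(0))$. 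The boundary case $d(x(0),x'(0))=R$ follows by continuity of the flow in initial conditions.
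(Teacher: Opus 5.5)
Your argument is correct. Its first half is the paper's argument in symmetrized form, and its second half takes a genuinely different and more self-contained route.

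\textbf{Contraction of the closed loop.} The paper writes $G_{\mathrm{cl}} = G_\pi + \sum_j K_j v_j$ at the level of the asymmetric matrix, kills each $K_j$ with \eqref{eq:killing_field}, and concludes via subadditivity of $\mu_2$ and Lemma~\ref{lem:uniform_equivalence}. You do the same computation on $S = G + G^T$ directly. Your version has a small advantage: it uses only $K_j + K_j^T = 0$, i.e., the standard Killing (strong CCM) condition of Remark~\ref{rem:Killing}, which is weaker than \eqref{eq:killing_field}.

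\textbf{From contraction to the tracking bound.} The paper simply asserts that the tube guarantees a minimizing geodesic between $x(t)$ and $x'(t)$ inside $\intX$, and then cites \cite[Thm 1]{lohmiller_contraction_1998}. You avoid geodesics altogether. You flow $\epsilon$-near-minimizing paths under $\fcl$, use the pointwise decay of $\delta^T M \delta$ together with a first-exit-time bootstrap, send $\epsilon \to 0$, and handle $d(x(0),x'(0)) = R$ by continuity. This sidesteps the existence of minimizing geodesics, which the paper's own setup explicitly does not guarantee. The price is the continuity bookkeeping you already flagged.

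\textbf{One point to make explicit.} The bootstrap needs the limit of the flowed path at the putative exit time to still lie in $\intX$, i.e., the tube must be a closed subset of $\R^n$. Since $d$ is defined intrinsically on $\intX$, the inclusion $\{x : d(x,x'(t)) \le R\} \subseteq \intX$ is literally automatic and gives you nothing. So read $d$ as the Riemannian distance of $M = \Theta^T\Theta + aI$ on all of $\R^n$. This is well defined because $\Theta$ is a network on $\R^n$ and $M \succeq aI$ everywhere. Under this reading the tube is closed. On the open tube this distance also coincides with the intrinsic one, because any path from $x'(t)$ of length less than $R$ stays inside the tube. The paper's geodesic claim relies on the same reading.
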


\begin{proof}
Letting $v(t) = u'(t) - \pi(x'(t))$, and $\fpi(t,x) = \fd(x) + g(x)\pi(x)$, the closed-loop dynamics are
\begin{align} \label{eq:fcl}
    \dot{x} = \fcl(t,x) 
    &= \fpi(t,x) + \textstyle\sum_{j=1}^m g_j(x) v_j(t).
\end{align}
Computing the asymmetric contraction matrix from Definition \ref{def:asym_contr_mtx}, by linearity,
for $t\in[0,\infty)$, $x\in X$,
\change{
\begin{align*}
    &\change{G_{\mathrm{cl}}(t,x;a,c)} = \change{G_{\pi}(t,x;a,c)} + \textstyle \sum_{j=1}^m \big[\Theta(x)^T \partial_{g_j(x)}\Theta(x) \\
    &\quad\quad\quad\quad \textstyle + \Theta(x)^T \Theta(x)\frac{\partial g_j}{\partial x}(x) 
    + a\frac{\partial g_j}{\partial x}(x) \big] v_j(t) 
\end{align*}
}
where \change{$G_{\pi}(t,x;a,c)$} is the asymmetric contraction matrix (Definition \ref{def:asym_contr_mtx}) of $\fpi$.
The condition $\textsc{loss}(\Theta,\pi;X,a,b,c) \leq 0$ implies that we have $\mu_2(G_\pi(t,x)) \leq 0$ by Corollary \ref{cor:loss}.
By assumption \eqref{eq:killing_field}, the bracketed term vanishes. 
Thus, subadditivity of $\mu_2(\cdot)$ implies $\mu_2(\change{G_{\mathrm{cl}}(t,x;a,c)}) \leq \mu_2(G_\pi(t,x)) + \mu_2(0) \leq 0$.
Therefore, Lemma \ref{lem:uniform_equivalence} implies $X$ is a contraction region for the closed-loop system $\fcl$ \eqref{eq:fcl} at rate $c$.

To conclude, since $u(t,x'(t)) = u'(t)$, the uniqueness of solutions implies $x'(t)$ is the trajectory of $\fcl$ from initial condition $x'(0)$.
Because $x'$ is a trajectory of a contracting system with a metric tube of radius $R$ around it contained in the open and connected set $\intX$, there is a minimizing geodesic between $x$ and $x'$ contained in $\intX$ which shrinks exponentially at rate $c$ \cite[Thm 1]{lohmiller_contraction_1998}.
\end{proof}

\begin{remark}[Geodesic metric tube]
The tube of radius $R$ in Corollary \ref{cor:tracking} is analytically important, ensuring geodesic paths connecting the two trajectories are always contained within the contraction region. 
However, computationally finding such sublevel sets of the geodesic metric is difficult due to the need to search for geodesics.
For simpler guarantees, one could apply the uniform bounds on the metric to obtain $\ell_2$-ball tubes sandwiching a geodesic metric tube:
if $\{x : \|x - x'(t)\|_2 \leq \olR\} \subseteq \intX$ for $\olR>0$, then the radius $R = \sqrt{a} \olR$ geodesic metric ball is also contained in $\intX$, and points satisfying $\|x - x'(0)\|_2 \leq \frac{\sqrt{a}}{\sqrt{b}} \olR$ live within the radius $R$ ball at time $0$.
In practice, we find that controllers trained to satisfy the contraction condition within region $X$ have good behavior when geodesics and even the trajectories themselves leave $X$ (see Section \ref{sec:example}).
\change{This may be due to the overestimation induced by bound propagation making the closed-loop system more robust than we can formally certify.}
\end{remark}

\begin{remark}[Killing field condition] \label{rem:Killing}
    The condition \eqref{eq:killing_field} is the asymmetric analogue to the strong CCM condition \cite[C2]{manchester_control_2017} which requires the actuation vector fields $g_j$ to be Killing fields of the metric $M(x) = \Theta(x)^T\Theta(x) \change{\, +aI}$,
\begin{align*}
    \textstyle \partial_{g_j(x)} M(x) + M(x) \frac{\partial g_j}{\partial x} (x) + \frac{\partial g_j}{\partial x}(x)^T M(x) = 0,
\end{align*}
which can be seen by a computation similar to Lemma \ref{lem:uniform_equivalence}.

For systems with a constant $g(x) = B$ for $B\in\R^{n\times m}$, we can augment the neural network with a linear layer projecting $x$ into the orthogonal subspace to $B$, using a matrix $B^\perp$ satisfying $B^\perp B = 0$.
For instance, when $B = [0, I]^T$, restricting the argument of $\Theta$ to the first $n-m$ components of $x$ structurally satisfies \eqref{eq:killing_field}.

In the general non-constant case, enforcing constraints such as this on the neural network $\Theta$ is generally challenging.
\change{With dynamic extension, any nonlinear system can be transformed into this form using an integral control law.}
\end{remark}

\begin{remark}[Control contraction metrics (CCM) and complete integrability]
Following the discussion in \cite[p. 3]{manchester_control_2017}, our learned metric $M(x) = \Theta(x)^T\Theta(x)$ and policy $\pi$ provide a \emph{completely integrable} differential control law $\delta u = \frac{\partial \pi}{\partial x}(x) \delta x$, which exponentially stabilizes the variational dynamics $\dot{\delta x} = (\frac{\partial \fd}{\partial x}(x) + \sum_{j=1}^m \frac{\partial g_j}{\partial x}(x) u'_j) \delta x + g(x)\delta u$, along trajectories $(x',u')$.
This can be seen by evaluating LMI~\eqref{eq:contraction_LMI} directly on the closed-loop dynamics $\fcl$ \eqref{eq:fcl}, and simplifying the resulting expression using the Killing field condition discussed in Remark \ref{rem:Killing} to obtain the following:
\begin{align*}
    M(x)& \textstyle \big(\frac{\partial \fd}{\partial x}(x) + g(x)\frac{\partial \pi}{\partial x}(x)\big) + (\star)^TM(x) \\
    &+ \partial_{\fd(x) + g(x)\pi(x)}M(x) + 2cM(x) \preceq 0,
\end{align*}
which verifies the quadratic (Finsler-)Lyapunov condition on $V(x,\delta x) = \delta x^T M(x) \delta x$.
In contrast, the convex criteria from CCM \cite{manchester_control_2017} instead require only a \emph{path integrable} differential controller $\delta u = K(t,x)\delta x$, where $K$ may not be the Jacobian of an explicit function.
This comes at the cost of an \emph{implicit} control policy, which requires an online search for a geodesic path between the current state and the reference.
Instead, our controller provides an \emph{explicit} representation, where two simple forward inferences of the policy $\pi$ provide the control input at minimal computational cost.
\end{remark}

\section{Example: Contracting Tracking Control for 10 State Quadrotor} \label{sec:example}

\begin{figure}
    \centering
    \begin{subfigure}{0.49\columnwidth}
        \includegraphics[width=\columnwidth, trim={0 2cm 0 3cm}, clip]{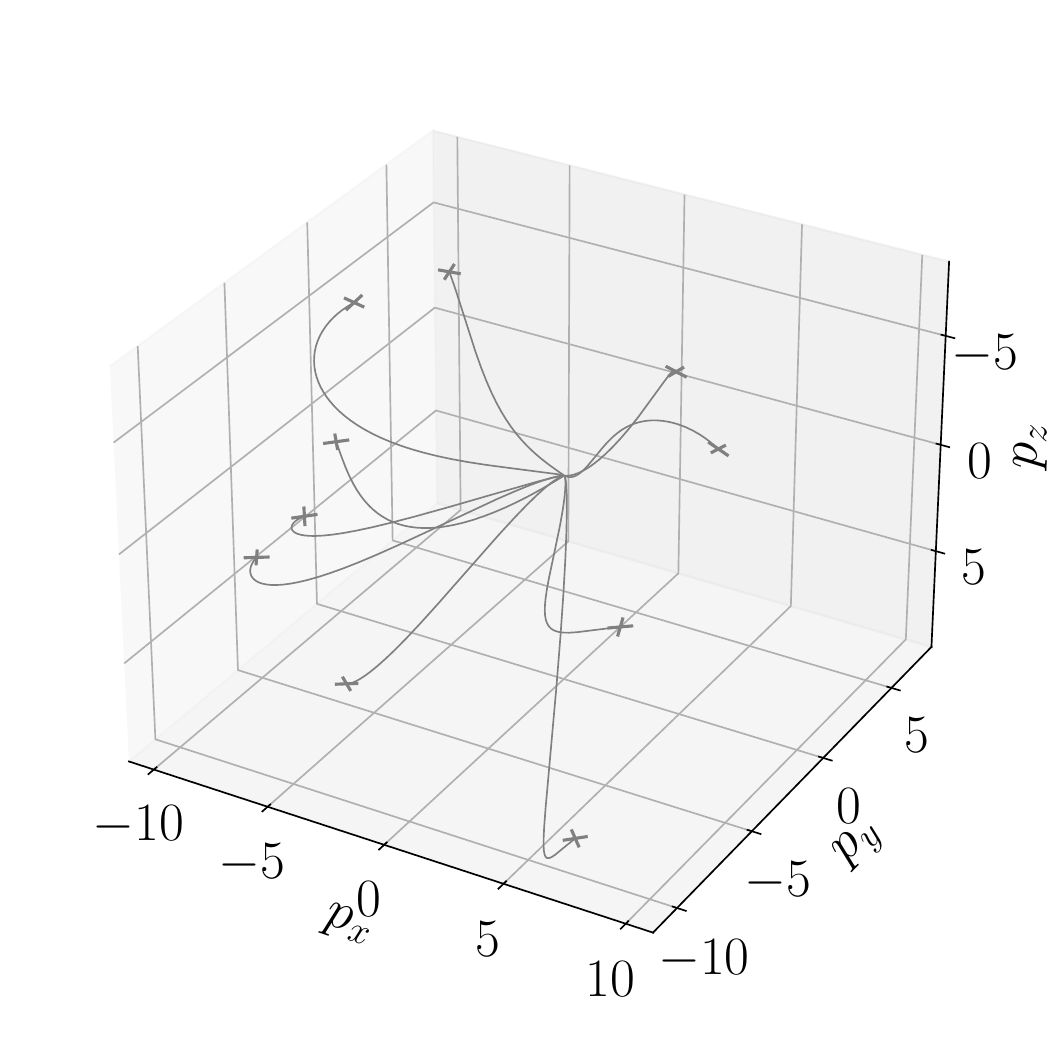}
        \caption{Hover}
        \label{subf:hover}
    \end{subfigure}
    \hfill
    \begin{subfigure}{0.49\columnwidth}
        \includegraphics[width=\columnwidth, trim={0 2cm 0 3cm}, clip]{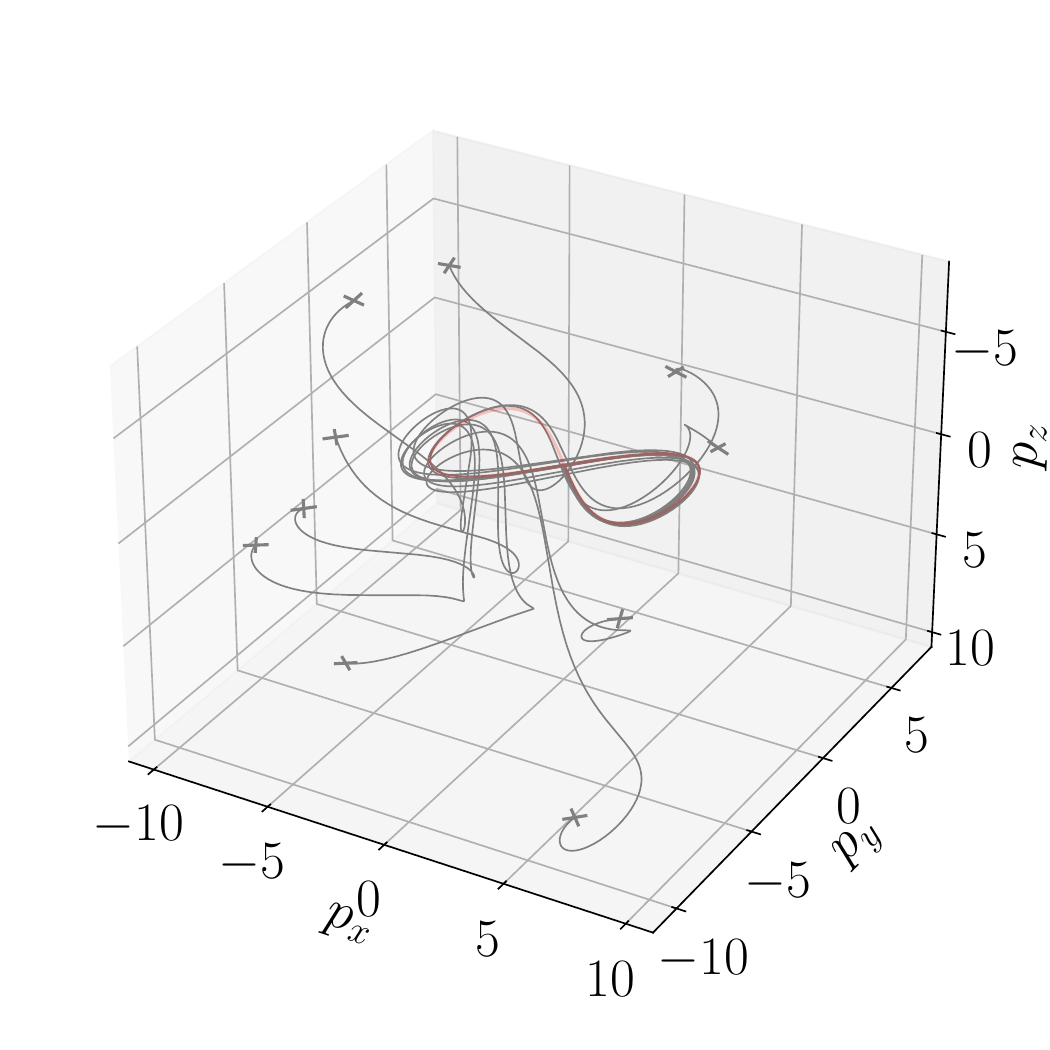}
        \caption{Figure Eight}
        \label{subf:eight}
    \end{subfigure}
    \begin{subfigure}{0.49\columnwidth}
        \includegraphics[width=\columnwidth, trim={0 2cm 0 3cm}, clip]{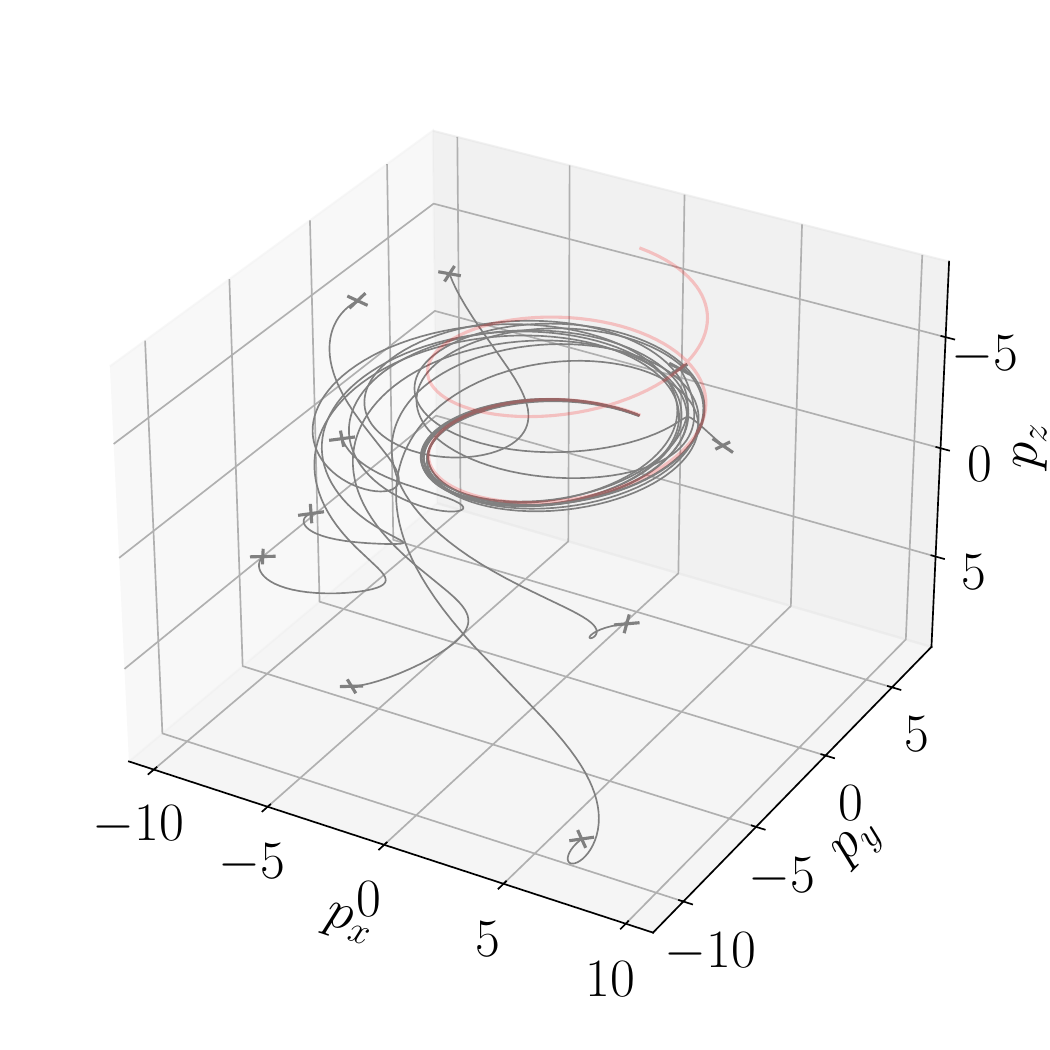}
        \caption{Helix}
        \label{subf:helix}
    \end{subfigure}
    \hfill
    \begin{subfigure}{0.49\columnwidth}
        \includegraphics[width=\columnwidth, trim={0 2cm 0 3cm}, clip]{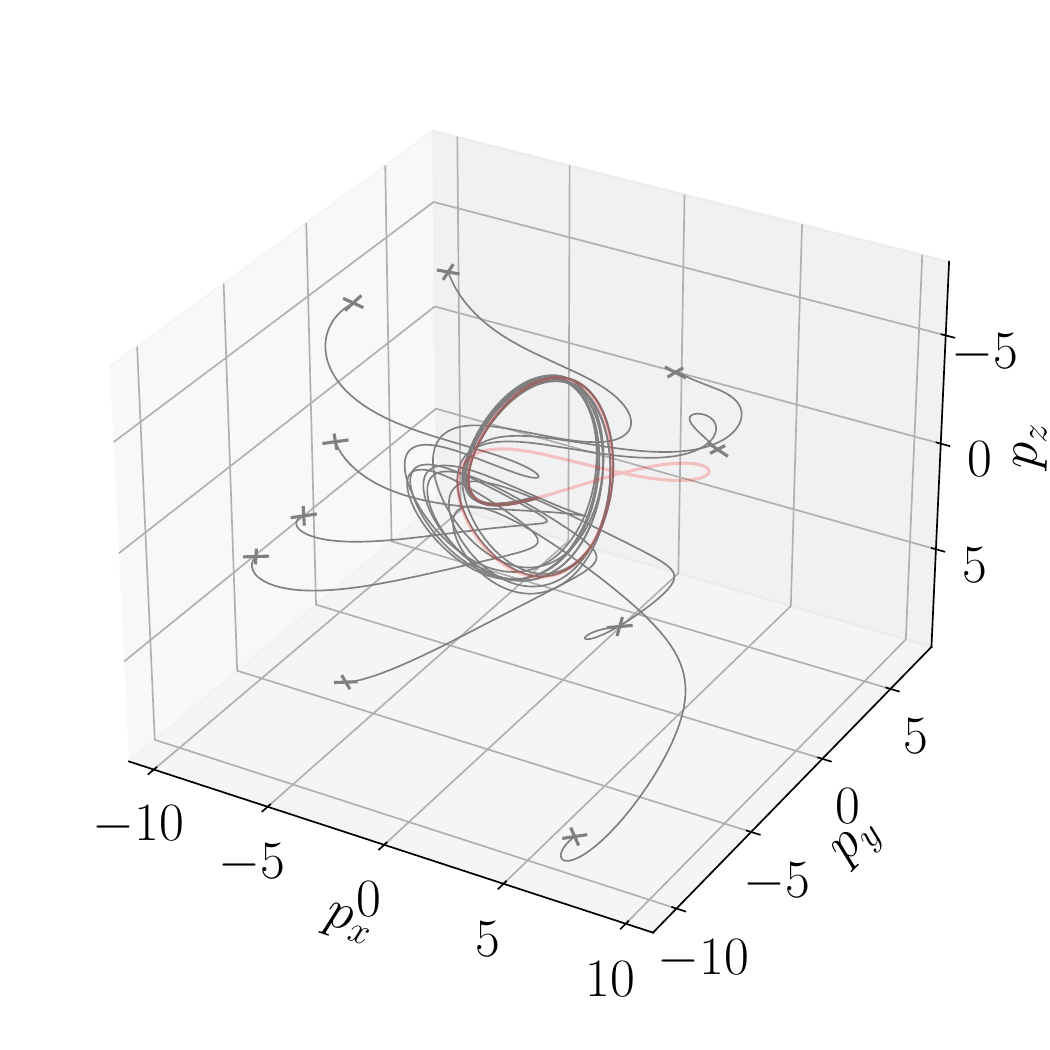}
        \caption{Trefoil Knot}
        \label{subf:trefoil}
    \end{subfigure}
    \begin{subfigure}{\columnwidth}
        \includegraphics[width=\columnwidth]{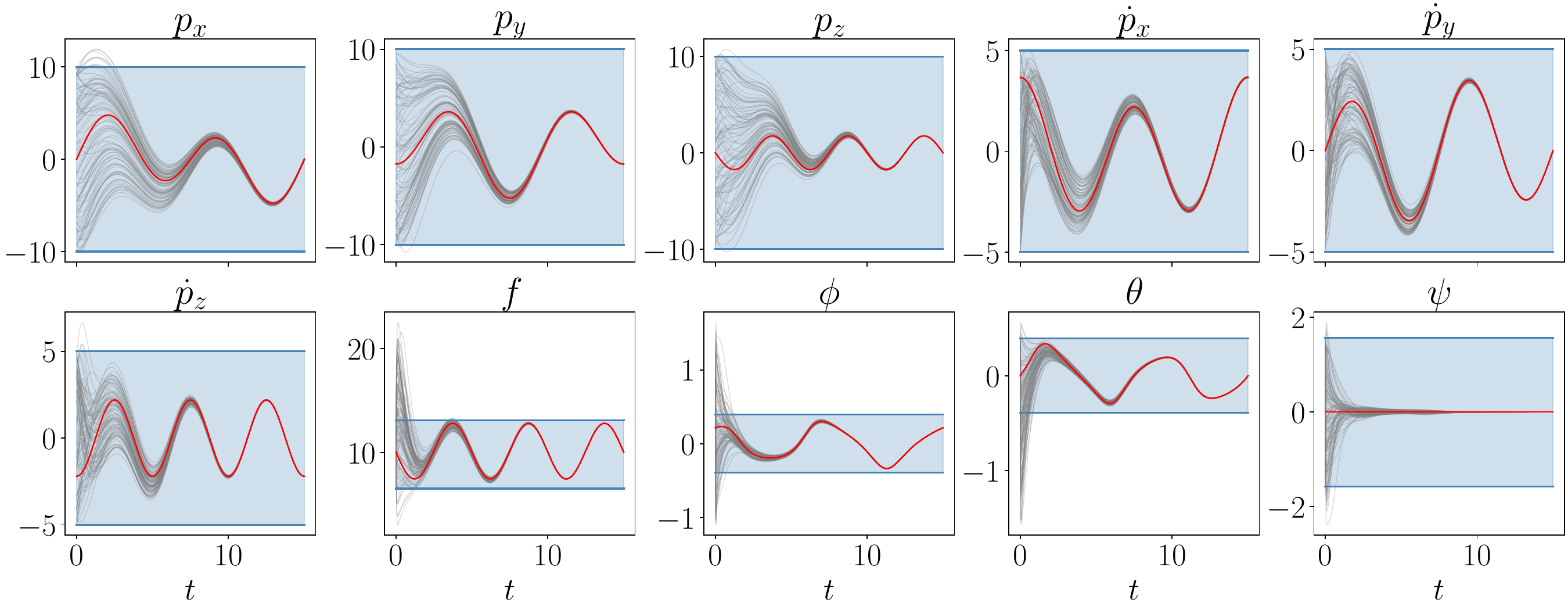}
        \caption{States vs Time for Trefoil Knot}
        \label{subf:states_time}
    \end{subfigure}
    \caption{
    Plots (a)-(d) \change{show the $(p_x,p_y,p_z)$ states for 10 randomly sampled initial conditions from the verified contraction region $X$ (pictured with the $\times$) track several nominal trajectories (pictured in red) under the feedback policy $u(t,x) = \pi(x) - \pi(x'(t)) + u'(t)$ from Corollary \ref{cor:tracking}.} 
    Plot (e) shows the state vs time for 100 randomly sampled initial conditions starting in $X$ for the trefoil knot (d). The nominal trajectory is in red, with the verified contraction region $X$ in blue. 
    While some trajectories leave the verified region, the controller empirically behaves well, bringing them back into the region and stabilizing to the commanded trajectory.
    }
    \vspace{-1em}
    \label{fig:quadrotor}
\end{figure}

In this section, we use Algorithm \ref{alg:loss} to design a contracting tracking controller for the 10-state quadrotor from \cite{singh_robust_2023}, \footnote{The code reproducing the results is available at \\ \url{https://github.com/gtfactslab/neural_contraction}}
\begin{align*}
    \ddot{p}_x = - \tau\sin\theta, \ 
    \ddot{p}_y = \tau \cos\theta\sin\phi, \
    \ddot{p}_z = g - \tau\cos\theta\cos\phi, 
\end{align*}
with $x = [p_x,p_y,p_z,\dot{p}_x,\dot{p}_y,\dot{p}_z,\tau,\phi,\theta,\psi]^T$, $[p_x,p_y,p_z]^T\in\R^3$ in North-East-Down convention, $[\phi,\theta,\psi]^T\in\R^3$ the XYZ Euler-angle representation, $\tau\in\R$ the net (mass-normalized) thrust, and $u = [\dot{\tau},\dot{\phi},\dot{\theta},\dot{\psi}]^T$ is the control input.

\change{For the controller, we learn a network with $2$ hidden layers of $32$ neurons returning a matrix $N(x) \in \R^{4\times11}$, and set $\pi(x) = N(x) \smallconc{x}{1}$ (affine parameterization).
For the metric, we learn a network with $2$ hidden layers of $32$ neurons returning the upper triangle of the matrix $\Theta(x) \in \R^{10\times 10}$.
}

We trained the contraction region $X = [-10,10]^3 \times [-5,5]^3 \times [\tfrac{2g}{3}, \tfrac{4g}{3}] \times [-\tfrac{\pi}{8},\tfrac{\pi}{8}]^2\times [-\tfrac{\pi}{2},\tfrac{\pi}{2}]$ by optimizing over growing regions $\frac{n}{100} X$, where we checkpoint and iterate $n$ when the loss hits $0$ until $n = 100$.
We partition each $n$-th region uniformly into \change{$10^2=100$ partitions over the $(\phi,\theta)$ states}.
We use the AdamW optimizer with the loss from Algorithm \ref{alg:loss}, with \change{$a=1$, $b=50$, $c=0.1$. 
In total, the training took a total of 10479 steps in 644 seconds, including 62 seconds for JIT compilation.}\footnote{Code was run on a desktop running Kubuntu 22.04, with AMD Ryzen 5 5600X, NVIDIA RTX 3070, and 32 GB of RAM.}
We use the differential flatness property of quadrotor dynamics to extract a dynamically feasible pair $(x'(t),u'(t))$ based on a desired $C^3$ curve $[p_x'(t),p_y'(t),p_z'(t),\psi'(t)]^T$. 
Then, we apply the explicit feedback policy derived in Corollary~\ref{cor:tracking} to track this trajectory.
Compared to \cite{sun_learning_2021}, we get a formal guarantee of contraction over the whole region $X$, which was \change{intractable} with sampling-based guarantees for $n=10$, and we completely avoid the need to search for minimizing geodesics like the approach from \cite{singh_robust_2023}.

\section{Conclusion and Future Work}

In this letter, we presented a framework for jointly learning a neural network controller and a neural contraction metric with formal guarantees of closed-loop contraction over a region $X$. 
By leveraging a novel asymmetric contraction condition, GPU-parallelized corner checks\change{, and end-to-end automatic differentiation}, our approach avoids \change{dependency management overconservativeness} and sample complexity issues associated to Lipschitz-based guarantees. 
For control-affine systems, we also developed an explicit tracking controller requiring only two forward inferences of the network.

Several directions remain for future work. Incorporating forward invariance guarantees for the certified region and enforcing bounded control inputs would strengthen the practical applicability of the framework. More broadly, extending the approach to learn Finsler-Lyapunov functions or to certify contraction with respect to non-Euclidean norms could further expand its scope to systems where Riemannian metrics are not the most natural choice.

\bibliographystyle{ieeetr}
\bibliography{references.bib}

\end{document}